\documentclass{IEEEtran}

\usepackage{graphicx}
\usepackage{amsmath}
\usepackage{amsfonts}
\usepackage{amssymb}
\usepackage{amsthm}
\usepackage{subfigure}
\usepackage{hyperref}
\usepackage{mdwlist}
\usepackage{color}
\usepackage{amsopn}
\usepackage{cite}
\usepackage{algorithm}
\usepackage{algorithmic}

\newtheorem{theorem}{Theorem}[section]
\newtheorem{proposition}[theorem]{Proposition}

\theoremstyle{definition}

\newtheorem{definition}{Definition}

\theoremstyle{remark}
\newtheorem{remark}{Remark}

\newcommand{\chg}[1]{\textcolor{black}{#1}}
\newcommand{\ecp}[1]{\textcolor{black}{#1}}
\newcommand{\argmin}{\arg\!\min}
\newcommand{\argmax}{\arg\!\max}

\newcommand{\TheTitle}{Minimum Probabilistic Finite State Learning Problem on Finite Data Sets: Complexity, Solution and Approximations}


\floatstyle{ruled}
\newfloat{Program}{htb}{lop}[section]
\floatname{Program}{Algorithm Description}

\title{{\TheTitle}}

\author{
Elisabeth Paulson\thanks{E. Paulson is with the Operations Research Center, Massachusetts Institute of Technology, Cambridge, MA, 02142, \href{mailto:elisabethpaulson63@gmail.com}{elisabethpaulson63@gmail.com}}~and~Christopher Griffin\thanks{C. Griffin is with the Mathematics Department, United States Naval Academy, Annapolis, MD, 21402, \href{mailto:griffinch@ieee.org}{griffinch@ieee.org}}
}

\begin{document}

\maketitle
\begin{abstract} In this paper, we study the problem of determining a minimum state probabilistic finite state machine capable of generating statistically identical symbol sequences to samples provided. This problem is qualitatively similar to the classical Hidden Markov Model problem and has been studied from a practical point of view in several works beginning with the work presented in: Shalizi, C.R., Shalizi, K.L., Crutchfield, J.P. (2002) \textit{An algorithm for pattern discovery in time series.} Technical Report 02-10-060, Santa Fe Institute. arxiv.org/abs/cs.LG/0210025. We show that the underlying problem is $\mathrm{NP}$-hard and thus all existing polynomial time algorithms must be approximations on finite data sets. Using our $\mathrm{NP}$-hardness proof, we show how to construct a provably correct algorithm for constructing a minimum state probabilistic finite state machine given data and empirically study its running time.
\end{abstract}

\begin{IEEEkeywords}
  probabilistic finite state machine, minimum state, estimation, clique covering, mixed integer programming, Markov chain
\end{IEEEkeywords}

\section{Introduction}

The goal of the field of computational mechanics is to use statistical methods to produce models that explain an observed, and possibly probabilistic, behavior. In \cite{CS02}, Shalizi and Crutchfield give an excellent overview of computational mechanics. These models can be used to simulate the behavior of, predict future performance for, and gain insights into the underlying processes that govern the observed behavior \cite{CS02}. Crutchfield broadly discusses the idea of pattern recognition and the importance of discerning signal and noise in \cite{Crutch12}. This is the underlying motivation behind computational mechanics-- attempting to produce the most meaningful models to capture the important features of a behavior.

Perhaps the most common and well-known models for pattern recognition are Hidden Markov models. Hidden Markov models (HMMs) are a pattern recognition tool used to construct a Markov model when the state space is unknown. HMMs are used for a wide variety of purposes, such as speech recognition \cite{Rab89}, handwriting recognition \cite{Xue06,Hu96}, and tracking \cite{Lef03}. Given a process which produces some string of training data, there are many algorithms that are widely used to infer a Hidden Markov model for the process. In this paper, we focus on the approach developed by Shalizi \cite{Shal02} for producing $\epsilon$-machines from a string of input data, which can be thought of as a kind of HMM. This work is extended in \cite{BSG09b,BSG10} and \cite{LSCBG13a}. A model like the one discussed in this paper is also assumed in \cite{BSG09,BSC+10,LSCBG13}

Shalizi's approach to constructing a HMM is to find statistically significant groupings of the training data which then correspond to causal states in the HMM. In this formulation, each state is really an equivalence class of unique strings. This algorithm groups unique strings based on the conditional probabilities of the next symbol as a window slides over the training data. The window gradually increases in length up to a maximum length $L$, which is the maximum history length that contains predictive power for the process. This approach is called the Causal State Splitting and Reconstruction (CSSR) algorithm. The result of the CSSR algorithm is a Markov model where each state consists of a set of histories of up to length $L$ that all share the same conditional probability distributions on the next symbol.

In this paper we merge states based on a multiple comparison test of the conditional probability distributions of each state; however, it should be noted that there are many accepted state merging techniques. Stolcke and Omohundro propose a state merging technique for Hidden Markov Models that uses Bayesian posterior probabilities to merge similar states \cite{Stolcke93}. Another known merging technique is the subtree merging algorithm \cite{Crutch89}. This is a very similar technique to that proposed in this paper— states are merged together based on the conditional probabilities of the next symbol. However, due to the iterative nature of this procedure, it has the same flaw as the CSSR algorithm in that it is not guaranteed to produce a minimal state machine. A new approach to estimating epsilon-machines uses a finite sample to generate a set of candidate unifilar hidden Markov Models, and uses Bayesian Structural Inference to determine the posterior probability of each model topology \cite{Strelioff14}. \cite{Schm06} eschews unifilar models to obtain an algorithm with linear running time using a Euclidean metric on conditional symbol distributions.

Assuming an input string of infinite length where the conditional probability distributions converge to their true values, the CSSR algorithm produces a minimal-state $\epsilon$-machine for the given process. That is, any time two states could be combined while still maintaining the desired properties of the $\epsilon$-machine, they are. \chg{As the length of the input string approaches infinity, the CSSR algorithm is correct in the limit as a result of the Glivenko-Cantelli Theorem \cite{tucker1959}.  However, for input sequences of finite length, the CSSR algorithm does not guarantee a minimal-state machine due to the technique by which the strings are grouped.} Consequently, state explosion can occur. This paper seeks to develop a new approach, based on the CSSR algorithm, which always guarantees a minimal-state HMM. We use a very similar idea to the CSSR algorithm, but formulate it as an integer programming problem whose goal is to minimize the number of states. We then show that solving this integer program is NP-hard.

\chg{It should be noted that there are many hardness results for problems that are similar to ours. In \cite{Abe}, Abe et al. show that polynomial time learnability of stochastic rules, in particular probabilistic concepts, with respect to the Kullback-Leibler divergence is equivalent to the same notion using quadratic distance. Abe et al. give a polynomial time learning algorithm for a class of convex linear combinations of stochastic rules. Pitt and Warmuth also consider a slightly different problem than this paper by considering DFA’s whose goal is to learn a set of positive and negative words and then classify incoming words into these categories \cite{Pitt}. It was shown by Gold that determining the smallest possible consistent DFA for this problem is $\mathrm{NP}$-hard \cite{Gold78}, and Pitt and Warmuth show that determining an approximately small DFA is also hard \cite{Pitt}. In \cite{KV94} the authors study the problem of distribution free learning of Boolean functions. In particular, they cover the case of learning unifilar acyclic finite automata from string samples. In particular, the results presented in this paper are analogous and consistent with our results. Where we differ is in considering probabilistic generating systems leading to probabilistic finite state machines. Additionally, the inclusion of the state minimization problem means we must consider cyclic automata. }

\chg{The constraints of this integer programming problem ensure that we obtain a model that is statistically consistent with, and able to generate, the input string. Thus, it accomplishes the same generalizability and consistency as the CSSR algorithm. As both Shalizi and Crutchfield explain, the best model is the one with minimum size that also minimizes the amount of error and maximizes predictive power \cite{CS02, Crutch12}. In this paper, we make strides at accomplishing that goal by producing a consistent, generalizable model, that also guarantees state minimization.} We then provide a reformulation of the integer programming problem as a minimal clique covering problem, which yields a faster algorithm (in practice) for finding the minimal-state HMM.

\section{Notation and Preliminaries}
Let $\mathcal{A}$ be a finite set of symbols or alphabet representing the observable events of a given system. Let $\mathbf{y}$ be the symbolic output of our system, so $\mathbf{y}$ is a sequence composed of the elements in $\mathcal{A}$. We will write $\mathbf{y}=y_1y_2...y_n$ to denote the individual elements in $\mathbf{y}$, so $y_i \in \mathcal{A}$.

Let $\mathcal{A}^*$ be the set of all sequences composed of symbols from $\mathcal{A}$, and let $\lambda$ be the empty sequence. If $\mathbf{y}$ is a sequence, then $\mathbf{x}$ is a \emph{subsequence}
if
\begin{enumerate*}
\item $\mathbf{x}$ is a sequence with symbols in $\mathcal{A}$ and
\item There are integers $i$ and $k$ such that
    $\mathbf{x}=y_{i}y_{i+1} \cdots y_{i+k}$.
\end{enumerate*}

The subsequences of $\mathbf{y}$ of length $k$ constitute the sliding data windows of length $k$ over $\mathbf{y}$.

For us, a hidden Markov model is a tuple $G=\langle{Q,\mathcal{A},\delta,
p}\rangle$, where $Q$ is a finite set of states, $\mathcal{A}$ is a
finite alphabet, $\delta \subseteq Q \times \mathcal{A} \times Q$ is
a transition relation, and $p:\delta
\rightarrow [0,1]$ is a probability function such that
\begin{equation}
\sum_{a \in \mathcal{A}, q' \in Q} p(q,a,q') = 1 \;\;\;\forall q \in Q
\end{equation}

This is slightly different than the standard definition of an HMM as observed in \cite{Rab89} because we are particularly interested in the state transition relation and associated probabilities, rather than the symbol probability distribution for a given state.

The Baum-Welch Algorithm is the standard expectation maximization
algorithm used to determine the transition relation and probability function of a
hidden Markov model. However, this algorithm requires an
initial estimate of the transition structure, so some initial knowledge of the structure of the Markov process
governing the dynamics of the system must be known.

Given a sequence $\mathbf{y}$ produced by a stationary process, the Causal
State Splitting and Reconstruction (CSSR) Algorithm
infers a set of causal states and a transition structure for
a hidden Markov model that provides a maximum likelihood estimate of the
true underlying process dynamics. In this case, a causal state is a function mapping histories to their equivalence classes, as in \cite{Shal04}.

The states are defined as equivalence classes of conditional
probability distributions over the next symbol that can be generated by
the process. The set of states found in this manner accounts for the
unifilar behavior of the process while tolerating random noise that
may be caused by either measurement error or play in the system under
observation. The CSSR Algorithm has useful information-theoretic
properties in that it attempts to maximize the mutual information among states
and minimize the remaining uncertainty (entropy) \cite{Shal04}.

The CSSR Algorithm is straightforward. We are given a sequence
$\mathbf{y} \in \mathcal{A}^*$ and know a priori the value $L
\in \mathbb{Z}_+$.\footnote{There are some heuristics for choosing $L$ found in \cite{Shal04}} For values of $i$ increasing from $0$ to $L$, we
identify the set of sequences $W$ that are subsequences of
$\mathbf{y}$ and have length $i$. (When $i=0$, the empty string is
considered to be a subsequence of $\mathbf{y}$.) We compute the
conditional distribution on each subsequence $\mathbf{x} \in W$ and
partition the subsequences according to these distributions.  These
partitions become states in the inferred HMM. If states already exist,
we compare the conditional distribution of subsequence $\mathbf{x}$ to
the conditional distribution of an existing state and add $\mathbf{x}$
to this state, if the conditional distributions are ruled identical.
Distribution comparison can be carried out using either a
Kolmogorov-Smirnov test or a $\chi^2$ test with a specified level of
confidence. The level of confidence chosen affects the Type I error
rate.  Once state generation is complete, the states are further split
to ensure that the inferred model has a unifilar transition
relation $\delta$.  Algorithm \ref{alg:CSSR} provide pseudocode for the CSSR Algorithm.

\begin{algorithm}[htbp]
\scriptsize
\caption{-- CSSR Algorithm}
\textbf{Input:} Observed sequence $\mathbf{y}$; Alphabet $\mathcal{A}$, Integer $L$\\
\underline{Initialization:}
\begin{algorithmic}[1]
\STATE Define state $q_0$ and add $\lambda$ (the empty string) to state
    $q_0$. Set $Q=\{q_0\}$.
\STATE Set $N:=1$.
\end{algorithmic}

\underline{Splitting} (Repeat for each $i \leq L$)
\begin{algorithmic}[1]
\STATE Set $W=\{\mathbf{x} | \exists q \in Q (\mathbf{x} \in q \wedge
    |\mathbf{x}| = i - 1)\}$ \COMMENT{The set of strings in states of the current
    model with length equal to $i-1$}
\STATE Let $N$ be the number of states.
\FOR{each $\mathbf{x} \in W$}
	\FOR{each $a \in \mathcal{A}$}
		\IF {$a\mathbf{x}$ is a subsequence of $\mathbf{y}$}
			\STATE Determine $f_{a\mathbf{x}|\mathbf{y}}: \mathcal{A} \rightarrow [0,1]$, the probability distribution over the next input symbol.
    			\STATE Let $f_{q_j|\mathbf{y}}: \mathcal{A} \rightarrow [0,1]$ be the unified state conditional probability distributions; that is, the probability given the system is in state $q_i$, that the next symbol observed will be $a$. For each $j$, compare $f_{q_j|\mathbf{y}}$ with $f_{a\mathbf{x}|\mathbf{y}}$ using an appropriate statistical test with confidence level $\alpha$. Add $a\mathbf{x}$ to the state that has the most similar probability distribution as measured by the $p$-value of the test. If all tests reject the null hypothesis that $f_{q_j|\mathbf{y}}$ and $f_{a\mathbf{x}|\mathbf{y}}$ are the same, then create a new state $q_{N+1}$ and add $a\mathbf{x}$ to it. Set $N:=N+1$.
		\ENDIF
	\ENDFOR
\ENDFOR
\end{algorithmic}

\underline{Reconstruction}
\begin{algorithmic}[1]
\STATE Let $N_{0} = 0$.
\STATE Let $N$ be the number of states.
\WHILE{$N_{0} \neq N$}
	\FOR{each $i \in 1,\dots,N$}
		\STATE Set $k:=0$
   		\STATE Let $M$ be the number of sequences in state $q_i$. Choose a sequence $\mathbf{x}_0$ from state $q_i$. 			\STATE Create state $p_{ik}$ and add $\mathbf{x}_0$ to it
		\FOR{all sequences $\mathbf{x}_{j}$ $(j>0)$ in state $q_i$}
			\STATE For each $a \in \mathcal{A}$ $\mathbf{x}_{j}a$ produces a sequence that is resident in another state $q_{k}$. Let $\delta(\mathbf{x}_j,a)=q_k$.
			\STATE For $l=0,\dots,k$, choose $\mathbf{x}$ from $p_{ik}$. If $\delta(\mathbf{x}_j,a)=\delta(\mathbf{x},a)$ for all $a \in \mathcal{A}$, then add $\mathbf{x}_j$ to $p_{ik}$. Otherwise, create a new state $p_{ik+1}$ and add $\mathbf{x}_j$ to it. Set $k:=k+1$.
		\ENDFOR
	\ENDFOR
	\STATE Reset $Q = \{p_{ik}\}$; recompute the state conditional probabilities $f_{q|\mathbf{y}}$ for $q \in Q$ and assign transitions using the $\delta$ functions defined above.
	\STATE Set $N_{0} = N$.
	\STATE Set $N$ to be the number of states in our current model.
\ENDWHILE
\STATE The model of the system has state set $Q$ and transition probability function computed from the $\delta$ functions and state conditional probabilities.
\end{algorithmic}
\label{alg:CSSR}
\end{algorithm}

The complexity of CSSR is $O(k^{2L+1})+ O(N)$, where $k$ is the size
of the alphabet, $L$ is the maximum subsequence length considered, and
$N$ is the size of the input symbol sequence\cite{Shal04}. Given a stream of
symbols $\mathbf{y}$, of fixed length $N$, from alphabet
$\mathcal{A}$, the algorithm is linear in the length of the input data
set, but exponential in the size of the alphabet.


\subsection{A Practical Problem with CSSR}
As observed in Shalizi and Shalizi (2004), the CSSR converges to the minimum state estimator asymptotically \cite{Shal04}, however it does not always result in a correct minimum state estimator for a finite sample.  With an infinitely long string, all of the maximum likelihood estimates converge to their true value and the CSSR algorithm works correctly. That is, as $N\rightarrow \infty,$ the probability of a history being assigned to the wrong equivalence class approaches zero. A formal proof of this is given in \cite{Shal02} and relies on large deviation theory for Markov chains.

The error in estimating the conditional probabilities with strings of finite length can cause the CSSR algorithm to produce a set of causal states that is not minimal.

The following example will clarify the problem. Consider $\mathcal{A}=\{0,1\}$ and $\mathbf{y}$ defined as follows:
\begin{equation}
\begin{aligned}
y_i=0 \;\; &1 \leq i \leq 518 \\
[y_{i}\; y_{i+1}\;y_{i+2}\; y_{i+3}] =1100 \;\; &518<i\leq 582\\
[y_{i}\;y_{i+1}\; y_{i+2}] =100 \;\; &582<i \leq 645\\
[y_{i}\;y_{i+1}\; y_{i+2}] =101 \;\; & 645<i \leq 648
\end{aligned}
\label{eqn:sequence}
\end{equation}

Without loss of generality, we consider exclusively strings of length two. The strings, in order of appearance, are $\{00,01,11,10\}$. The conditional probability distribution on the next element for each string is shown in Table \ref{table:probabilities}:
\begin{table}[ht]
\centering
\caption{Conditional Probabilities}
\begin{tabular}{c c c}
\hline
string & $Pr(0 | \text{string})$ & $Pr(1 | \text{string})$ \\
\hline \hline
00 & 0.9314 & 0.0686 \\
01 & 0.5789 & 0.4211 \\
11 & 1.0 & 0 \\
10 & 0.9737 & 0.0263 \\
\hline
\end{tabular}
\label{table:probabilities}
\end{table}

The $p$-values associated with comparing each string's conditional probability distribution to that of every other string are shown in Table \ref{table:pvalues}:
\begin{table}[ht]
\centering
\caption{p-values}
\begin{tabular}{| c || c | c | c | c|}
\hline
string & 00 & 01 & 11 & 10 \\ \hline \hline
00 & 1.0000 & -- & -- & -- \\ \hline
01 & 0 & 1.0000 & -- & -- \\ \hline
11 & 0.0067 & 0 & 1 & -- \\ \hline
10 & 0.0944 & 0 & 0.7924 & 1 \\
\hline
\end{tabular}
\label{table:pvalues}
\end{table}

We now step through the Splitting phase of the CSSR algorithm:
\begin{enumerate}
\item String $00$ is put into its own state, $q_1$
\item We use Pearson's Chi-Squared test to test if the $01$ and $00$ have the same conditional distribution. This results in a $p$-value of approximately 0, as seen Table \ref{table:pvalues} so string $01$ is put into a new state, $q_2$
\item We compare the conditional distribution of $11$ to string $00$. This results in a $p$-value of 0.0067, and comparing it to $01$ results in a p-value of 0, so string $11$ is put into a new state, $q_3$
\item We compare the conditional probability distribution of $10$ to that of $00$. This results in a $p$-value of 0.0944. Comparing it to $00$ results in a $p$-value of 0, and comparing it to $11$ results in a p-value of 0.7924. Since 0.7924 is greater than 0.0944 and greater than the chosen significance level of 0.05, the string $10$ is clustered with string $11$, so $q_3=\{11,10\}$.
\end{enumerate}
Thus, at the end of the Splitting step of CSSR, we are left we three different states:
\begin{align*}
q_1=\{00\}\\
q_2=\{01\}\\
q_3=\{11,10\}
\end{align*}

We now go on to the Reconstruction step. Let $\mathbf{x}_{ij}$ be a state in $q_i$ where $j \in 1,2,...,|q_i|$. The Reconstruction step checks whether, for each $a \in \mathcal{A}$, $\delta(\mathbf{x}_{i1},a)=\delta(\mathbf{x}_{i2},a)=...=q_k$. Recall that $\delta(\mathbf{x}_{ij},a)=q_k$ means that $\mathbf{x}_{ij}a$ produces a sequence that resides in state $q_k$.  If this is not satisfied, then state $q_i$ must be broken up into two or more states until we have a unifilar transition relation function.

In our example, the first two states do not need to be checked since they each only consist of one string. We check the third state:
$\delta(11,0)=q_3$ since string $10 \in q_3$, and $\delta(10,0)=q_1$ since string $0 \in q_1$. Thus, determinism is not satisfied and state $q_3$ must be split into two states. The result of the CSSR algorithm is a four-state Markov chain where each string resides in its own state.

We now show why this result is not the minimal state Markov generator for this input sequence. Suppose that during the Splitting step, string four was put into state $q_1$ instead of $q_3$. This could have been done soundly from a statistical point of view, since 0.0944 is also greater than our alpha-level of 0.05. However, by the CSSR heuristic, the fourth string was grouped according to the \textit{highest} p-value. If the fourth string were put in $q_1$, then after the Splitting step we would have obtained the following states:
\begin{align*}
q_1=\{00,10\}\\
q_2=\{01\}\\
q_3=\{11\}
\end{align*}
Now we move on to the Reconstruction step. This time, we only need to consider state $q_1$. Notice that $\delta(10,0)=\delta(00,0)=q_1$ and $\delta(00,1)=\delta(10,1)=q_2$. We see that state $q_1$ does satisfy determinism, and does not need to be split. Thus, our final number of states is only three, which is minimal.

This provides us with motivation to reformulate the CSSR algorithm so that it always produces a the minimal number of states, even when given a small sample size. \chg{The remainder of this paper seeks to address this issue, proposes a reformulation of the CSSR algorithm which always results in minimal state models even with a finite sample, and discusses the computational complexity of the minimum state unifilar probabilistic finite state automaton modeling problem.}

\subsection{State Explosion with CSSR}
Although in the preceding example, the number of states that the CSSR algorithm returned was only one larger than the integer program, the differences become larger as we increase the size of the alphabet, $k$. In this section we generalize the preceding example to show that state explosion can occur quadratically as a function of $k$, while keeping $L=2$.

\chg{In the sequel, we treat individual $L$-length strings as vertices in a graph whose edge set consists of pair of vertices with statistically equivalent next-symbol distributions.} In the example, the problem with CSSR arose from a connected component of this graph that contained three vertices of the form $xy$ $zy$ and $uv$ ($00$, $10$ and $11$ in the example). The CSSR algorithm clustered $xy$ and $uv$ together, resulting in the state being split, whereas the integer program clustered $xy$ and $zy$ together, satisfying the unifilar property. The ``worst case'' string for the CSSR algorithm is one that maximizes the number of these special three-vertex components, and then chooses the wrong clustering for each of them.

This worst case scenario results in CSSR producing a machine with $k^2$ states. The integer program we discuss in Section \ref{sec:IP}, however, would choose correctly for each component so that no states are split during the splitting phase. With some simple calculation it can be seen that, for an even $k$, the integer program will result in a machine with $\frac{k^2}{2}+\lceil \frac{k^2}{6} \rceil $ states, and an odd $k$ will result in a machine with $\frac{(k-1)k}{2}+\lceil \frac{(k^2-3k)}{6} \rceil +k$ states. A figure showing a comparison between CSSR and the integer program for varying values of $k$ can be seen in Figure \ref{fig:state_comparison}.

\begin{figure}[htbp]
	\centering
	\includegraphics[scale=0.5]{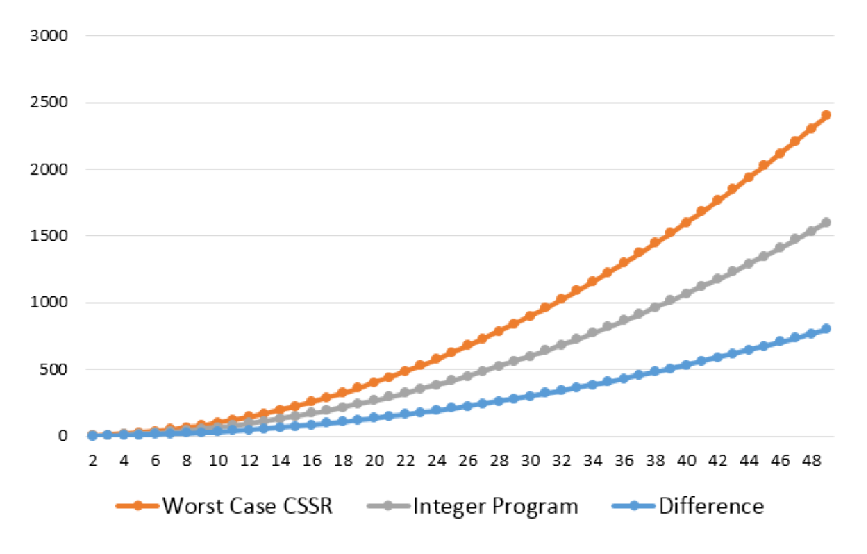}
	\caption{The number of states produced by CSSR and the integer program for the "worst case" strings}
	\label{fig:state_comparison}
\end{figure}

It remains to see that there are indeed strings which have this ``worst case'' property of their graph counterpart having the maximum possible number of special three-vertex components. In order to show this, we argue that given any conditional probability distribution on all strings of length $L$, we can build a sequence whose strings of length $L$ have statistically equivalent distributions to the specified distribution.

A formal proof of this is a part of the proof of our general NP-hardness result. Intuitively, this can be accomplished by \chg{constructing} a Markov chain where each string of length $L$ corresponds to a state and creating probabilistic transitions as needed. We then use this chain to produce a sequence. A long enough sequence will have statistically equivalent properties to a desired distribution, and can thus be constructed as a ``worst case'' string.

\section{Integer Programming Formulation}\label{sec:IP}

Much of this section has been adapted from \cite{Clus13}. Suppose we have a string $\mathbf{y} \in \mathcal{A}^*$ and we assume that $L$ is known. Let $W$ be the set of distinct substrings of length $L$ in $\mathbf{y}$, and let $|W|=n$. When we cluster the elements of $W$ into states we must be sure of two things: 1) That all strings in the same state have the same conditional distributions, and 2) that the resulting transition function is unifilar. Our goal is to formulate an integer programming problem that minimizes the number of states and preserves the two conditions above. Define the following binary variables:
\begin{equation}
x_{ij}= \begin{cases}
1 &\text{string $i$ maps to state $j$} \\
0 & \text{else} \end{cases}
\label{eq:x}
\end{equation}

\begin{remark}We assume implicitly that there are $n$ states, since $n$ is clearly the maximum number of states that could be needed. If, for some $j$, $x_{ij}=0\;\forall \; i$, then that state is unused, and our true number of states is less than $n$.
\label{rem:nstates}
\end{remark}

Let:

\begin{equation}
z^\sigma_{il}= \begin{cases}
1 &\text{string $i$ maps to string $l$ upon symbol $\sigma$} \\
0 & \text{else} \end{cases}
\label{eq:z}
\end{equation}

For example, $z_{il}^\sigma=1$ if $i=\langle x_1, ..., x_L \rangle$ and $l=\langle x_2,...,x_L,\sigma \rangle$. Assuming $j$ and $k$ are state indices while $i$ and $l$ are string indices, we also define the variable $y_{jk}^\sigma$ \chg{by the following logical rule:}
$$(x_{ij}=1) \wedge (z_{il}^\sigma=1) \wedge (x_{lk}=1) \implies (y_{jk}^\sigma=1)$$

This variable ensures the encoded transition relation maps each state to the correct next state given a certain symbol. Specifically, if string $i$ in clustered in state $j$ and string $i$ transforms to string $l$ given symbol $\sigma$, and string $l$ is clustered in state $k$, then $(j,\sigma,k)$ must be the transition relation. This can be written as the constraint
$$(1-x_{ij})+(1-z_{il}^\sigma)+(1-x_{lk})+y_{jk}^\sigma \geq 1 \;\;\; \forall i,j,k,l,\sigma$$

In order for the automata to be unifilar, we must have the condition that
$$\sum_k y_{jk}^\sigma \leq 1 \;\;\; \forall j,\sigma $$
This condition ensures that for a given symbol, each state can transition to only one other state.

We ensure that the strings in each state have the same conditional probability distributions using a parameter $\mu$, where
\begin{equation}
\mu_{il}= \begin{cases}
1 &\text{\small strings $i$ and $l$ have identical conditional distributions} \\
0 & \text{else} \end{cases}
\label{eq:u}
\end{equation}
In order to determine if two strings have identical distributions an appropriate statistical test (like Pearson's $\chi^2$ or Kolmogorov-Smirnov) must be used. The variables must satisfy
$$(x_{ij}=1) \wedge (x_{lj}=1) \implies (\mu_{il}=1)$$
This states that strings $i$ and $l$ can only belong to the same state if they have statistically indistinguishable distributions. This can be written as the following constraint
$$(1-x_{ij})+(1-x_{lj})+\mu_{il} \geq 1 \;\;\; \forall i,l,j $$

Finally, we define the variable $p$ to be used in our objective function. \chg{From Remark \ref{rem:nstates}}, this variable simply enumerates up the number of ``used" states out of $n$. That is:
\begin{equation}
p_{j}= \begin{cases}
0 &\sum_{i} x_{ij}=0\\
1 & \text{else} \end{cases}
\label{eq:p}
\end{equation}
This can be enforced as the following constraint
$$p_j \geq \frac{\sum_i x_{ij}}{n}, \; p_j \in \{0,1\}$$

Note that $\frac{\sum_i x_{ij}}{n} \leq 1$. These constraints are equivalent to Equation \ref{eq:p} because  when $\sum_i x_{ij}=0$,  $ p_j$ will be 0, and when $\sum_i x_{ij}>0$, $p_j$ will be 1, when we minimize $\sum_j p_j$ in our optimization problem. This addition extends the work done in \cite{Clus13}, in which a series of optimization problems had to be solved.

\begin{definition}[Minimum State Deterministic pFSA Problem]
The following binary integer programming problem, which, if feasible, defines a mapping from strings to states and a probability transition function between states which satisfies our two requirements and is called the Minimum State Deterministic pFSA (MSDpFSA) Problem.
\begin{equation}
 P(N) = \left\{
\begin{aligned}
\min\;\; &\sum_{j\in \{1,...,n\}} p_j \\
s.t.\;\; &(1-x_{ij})+(1-z_{il}^\sigma)+(1-x_{lk})+y_{jk}^\sigma \geq 1, \\
&\hspace*{3em}\forall i,l \in W,j,k\in  \{1,...,n\},\sigma \in \mathcal{A} \\
&\sum_k y_{jk}^\sigma \leq 1 \;\;\; \forall j,k\in  \{1,...,n\},\sigma \in \mathcal{A} \\
&(1-x_{ij})+(1-x_{lj})+\mu_{il} \geq 1,\\
&\hspace*{3em} \forall i,l\in W,j\in  \{1,...,n\} \\
&p_j \geq \frac{\sum_i x_{ij}}{n} \;\; \forall i\in W,j\in  \{1,...,n\} \\
&\sum_j x_{ij}=1 \;\; \forall i\in W \\
&p_j \in \{0,1\} \;\; \forall j \in  \{1,...,n\}\\
& x_{ij} \in \{0,1\} \;\; \forall i\in W, j\in  \{1,...,n\}\\
&y_{jk}^\sigma \in \{0,1\} \;\; \forall j,k\in  \{1,...,n\},\sigma \in \mathcal{A} 
\end{aligned}\right.
\label{eqn:Problem}
\end{equation}
\end{definition}
We note that in Problem \ref{eqn:Problem}, the constraint $(1-x_{ij})+(1-x_{lj})+\mu_{il} \geq 1$ becomes redundant $\mu_{il}=1$, and similarly the constraint $(1-x_{ij})+(1-z_{il}^\sigma)+(1-x_{lk})+y_{jk}^\sigma \geq 1$ is redunant when $z_{il}=1$.  

The following proposition is clear from the construction of Problem \ref{eqn:Problem}:
\begin{proposition} Any optimal solution to MSDpFSA yields an encoding of a minimum state probabilistic finite machine capable of generating the input sequence(s) with statistically equivalent probabilities. \chg{Further,
\begin{enumerate}
\item There is a starting state and set of transitions that produces the input sequence exactly
\item Additional strings produced by a solution to MSDpFSA are statistically equivalent to the input string
\item The number of states in any solution to MSDpFSA is bounded above by the number of unique strings of length $L$ in the input sequence
\end{enumerate}}
\end{proposition}

\begin{remark}
Because this formulation does not rely on the assumption of infinite string length to correctly minimize the number of states, it succeeds where the CSSR algorithm fails. Instead of clustering strings into the state with the \textit{most} identical conditional probability distributions, this optimization problem uses the identical distributions as a condition with the \textit{goal} of state minimization. Thus, in the example, even though the fourth string had a higher p-value when compared to the third state than the first state, since both of the p-values are greater than 0.05 this algorithm allows the possibility of clustering the fourth string into either state, and chooses the state that results in the smallest number of final states after reconstruction. Unlike CSSR, this algorithm does not rely on asymptotic convergence of the conditional probabilities to their true value. However, it is clear that as the sample size increases the probability distributions will become more exact, leading to better values of $\mu$.
\end{remark}

\subsection{Resolution to the Example Problem}
Using the integer program formulation to solve the example previously presented does result in the minimum state estimator of the process represented by the given input string. The integer program finds the following solution for the variables and parameters $z,x,y, \mu$, and $p$ seen in Table \ref{table:variables}. Our objective function is to minimize the sum of the variables $p_j$. As can be seen in Table \ref{table:variables}, there is only one nonzero column in the $x$ matrix (the last column). Thus, only $p_4=0$, and the the value of our objective function is 3.

\begin{table}
\centering
\caption{Integer Program Variables and Parameters}
\subtable[$x$ values]
{\begin{tabular}{|c| c c c c| }
\hline
-- & $q_1$ & $q_2$ & $q_3$ & $q_4$ \\ \hline
00 & 1 & 0 & 0 & 0\\
01 & 0 & 1 & 0 & 0\\
11 & 0 & 0 & 1 & 0\\
10 & 1 & 0 & 0 & 0 \\ \hline
\end{tabular}}
\subtable[$\mu$ values]
{\begin{tabular} {|c| c c c c|}
\hline
-- & 00 & 01 & 11 & 10 \\ \hline
00 & 1 & 0 & 0 & 1 \\
01 & 0 & 1 & 0 & 0 \\
11 & 0 & 0 & 1 & 1 \\
10 & 1 & 0 & 1 & 1 \\ \hline
\end{tabular}}
\subtable[$z^0$ values]
{\begin{tabular} {|c| c c c c|}
\hline
-- & 00 & 01 & 11 & 10 \\ \hline
00 & 1 & 0 & 0 & 0 \\
01 & 0 & 0 & 0 & 1 \\
11 & 0 & 0 & 0 & 1 \\
10 & 1 & 0 & 0 & 0 \\ \hline
\end{tabular}}
\subtable[$z^1$ values]
{\begin{tabular} {|c| c c c c|}
\hline
-- & 00 & 01 & 11 & 10 \\ \hline
00 & 0 & 1 & 0 & 0 \\
01 & 0 & 0 & 1 & 0 \\
11 & 0 & 0 & 1 & 0 \\
10 & 0 & 1 & 0 & 0 \\ \hline
\end{tabular}}
\subtable[$y^0$ values]
{\begin{tabular} {|c| c c c|}
\hline
-- & $q_1$ & $q_2$ & $q_3$  \\ \hline
$q_1$ & 1 & 0 & 0 \\
$q_2$ & 1 & 0 & 0 \\
$q_3$ & 1 & 0 & 0 \\ \hline
\end{tabular}}
\subtable[$y^1$ values]
{\begin{tabular} {|c| c c c|}
\hline
-- & $q_1$ & $q_2$ & $q_3$  \\ \hline
$q_1$ & 0 & 1 & 0 \\
$q_2$ & 0 & 0 & 1 \\
$q_3$ & 0 & 0 & 1 \\ \hline
\end{tabular}}
\subtable[$p$ values]
{\begin{tabular} {|c c c c|}
\hline
1 & 1 & 1 & 0 \\ \hline
\end{tabular}}
\label{table:variables}
\end{table}

We can also recover a matrix of transition probabilities shown in Table \ref{table:T}. Our final reduced state space machine is shown in Figure \ref{fig:machine}.

\begin{table}
\centering
\caption{State Transition Probabilities}
\begin{tabular}{|c| c c c|}
\hline
-- & $q_1$ & $q_2$ & $q_3$  \\ \hline
$q_1$ & 0.9341 & 0.0659 & 0 \\
$q_2$ & 0.5789 & 0 & 0.4211 \\
$q_3$ & 1 & 0 & 0 \\ \hline
\end{tabular}
\label{table:T}
\end{table}

\begin{figure}[htbp]
\centering
\includegraphics[scale=1]{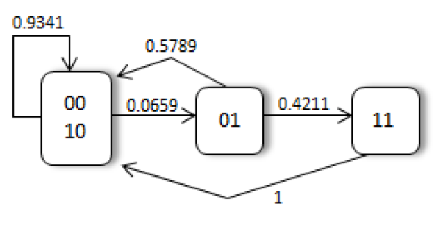}
\caption{The reduced state space machine derived from the input given by Equation \ref{eqn:sequence}
}
\label{fig:machine}
\end{figure}

\section{Computational Complexity of MSDpFSA}
In this section we assert that the integer programming problem given by Problem \ref{eqn:Problem} is $\mathrm{NP}$-hard.

We show that \chg{MSDpFSA} is $\mathrm{NP}$-hard by reduction to the minimum clique covering problem. Recall given a graph, a minimal clique covering problem is to identify a set of cliques in the graph so that each vertex belongs to at least one clique and so that this set of cliques is minimal in cardinality. \chg{We construct this proof in stages: (i) To illustrate the idea behind the reduction, we show that the Minimum State Non-Deterministic pFSA Problem, in which we drop the determinism constraint, is $\mathrm{NP}$-hard by reduction to the minimum clique covering problem. (ii) We show for an aribtrary graph structure, we can create a data \textit{gadget} that transforms the minimum clique covering problem into MSDpFSA in polynomial time.}

\subsection{Minimum State Non-Deterministic pFSA Problem}
While the CSSR algorithm attempts to find a minimal state \textit{unifilar} finite-state automata, being unifilar is not a necessary property for accurate reconstruction. For a given input sequence, the construction of a finite-state automata which is \textit{not} necessarily unifilar is less computationally intensive than solving the MSDpFSA problem, and is discussed by Schmiedekamp et al. (2006) \cite{Schm06}. Like the CSSR algorithm, the algorithm presented in \cite{Schm06} is a heuristic which does not always results in a minimal state probabilistic FSA as we show in the sequel. The following definition provides an integer programming problem for optimally solving a minimal state pFSA which is not necessarily unifilar. This is identical to Problem \ref{eqn:Problem} except that we discard the constraints that ensure the automata is unifilar.

\begin{definition}[Minimum State Non-Deterministic pFSA Problem]
The following binary integer programming problem, which, if feasible, defines a mapping from strings to states and a probability transition function between states which is not necessarily unifilar and is called the Minimum State Non-Deterministic pFSA (MSNDpFSA) Problem.
\begin{equation}
 P'(N) = \left\{
\begin{aligned}
\min\;\; &\sum_{j\in  \{1,...,n\}} p_j \\
s.t.\;\; &(1-x_{ij})+(1-x_{lj})+\mu_{il} \geq 1,\\
&\hspace*{4em}\forall i,l\in W,j\in  \{1,...,n\} \\
&p_j \geq \frac{\sum_i x_{ij}}{n} \;\; \forall i\in W,j\in  \{1,...,n\} \\
&\sum_j x_{ij}=1 \;\; \forall i\in W\\
& p_j \in \{0,1\} \;\; \forall j \in  \{1,...,n\} \\
& x_{ij} \in \{0,1\} \;\; \forall i\in W, j\in \{1,...,n\}
\end{aligned}\right.
\label{eqn:MSNDpFSA}
\end{equation}
\end{definition}

\begin{theorem} MSNDpFSA is $\mathrm{NP}$-Hard.
\label{prop:NPhardness}
\end{theorem}
\begin{IEEEproof}
Let $G=(V,E)$ be the graph on which we want to find the minimal clique covering. Assume that $V=\{1,2,...,n\}$ and let $I$ be a $n\times n$ matrix such that $I_{ij}=1 \iff$ there is an edge connecting vertices $i$ and $j$. We reduce Problem \ref{eqn:MSNDpFSA} by letting $n$ be the number of unique strings of length $L$ in $\mathbf{y}$, so we can let each string correspond to a vertex of $G$. Let $I_{ij}=1 \iff \mu_{ij}=1$. This means that two strings are connected if and if only if they have identical conditional probability distributions. We show that Problem \ref{eqn:MSNDpFSA} is equivalent to finding a minimal clique cover of $G$ where $\sum_j p_j$ is the number of cliques. Let the set of cliques corresponding to the minimal clique cover of $G$ be $C=\{C_1,...,C_m\}$, where $C_j$ is a specific clique, and $C_j=V_j \subset V$.

We can define the variables $x$ by $x_{ij}=1 \iff V_i \in C_j$. Thus, the constraint that $(1-x_{ij})+(1-x_{lj})+\mu_{il} \geq 1 \;\;\; \forall i,l,j $ simply means that if two vertices are in the same clique then there must be an edge between them. We also have that $p_j=0$ iff there is at least one vertex in clique $j$, so the set of $j$ such that $p_j=1$ corresponds to the non-empty cliques, i.e., is identical to $C$ (since $C$ only consists of non-empty cliques). Thus, minimizing $\sum_j p_j$ is equivalent to minimizing the number of cliques needed to cover $G$. The constraint that $\sum_j x_{ij}=1 \;\; \forall i$ simply means that each vertex belongs to exactly one clique. Thus, it is clear that the integer programming problem given in Problem \ref{eqn:MSNDpFSA} is equivalent to a minimal clique covering and is $\mathrm{NP}$-hard.
\end{IEEEproof}


We illustrate the equivalence of Problem \ref{eqn:MSNDpFSA} and the minimal clique covering through an example. Using the same example as before, the resulting graph $G$ is shown in Figure \ref{fig:graph}. The string $00$ and $10$ have an edge in common because they have identical conditional distributions as noted by Table \ref{table:pvalues}. By the same reasoning, $10$ and $11$ also have an edge in common. However, $00$ and $11$ do not share an edge, and $01$ has no incident edges.

The integer programming problem given by Equation \ref{eqn:MSNDpFSA} produces either of the following two results:
\begin{gather*}
Q=\{q_1=\{00,10\},  q_2=\{11\}, q_3=\{01\}\} \;\; \text{or} \\
Q'=\{q_1=\{11,10\}, q_2=\{00\},q_3=\{01\}\}
\end{gather*}
Both of these two groupings results in one of two minimal clique coverings. Let $V_1=00, \; V_2=10,\; V_3=11,\; V_4=01$.
\begin{gather*}
C=\{C_1=\{1,2\}, \; C_2=\{3\},\; C_3=\{4\}\} \;\; \text{or} \\
C'=\{C_1=\{2,3\}, \; C_2=\{1\},\; C_3=\{4\}\}
\end{gather*}

\begin{figure}[htbp]
\centering
\subfigure[]{\label{fig:graph}
  \includegraphics[scale=1]{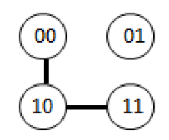}}
\subfigure[]{\label{fig:graph2}
  \includegraphics[scale=1]{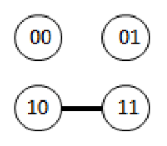}}
\caption{(a) The graph $G$ corresponding to the string $y$ given in Equation \ref{eqn:sequence}. (b) The graph $G$ considered by the CSSR algorithm.}
\end{figure}

When the unifilar constraint is added to the integer programming formulation, the result is $Q'$ (or equivalently $C'$) instead of $Q$ (or $C$). The solver recognizes that $Q \;(or \; C)$ would have to be split in order to satisfy the unifilar property, so $Q' \; (or \; C')$ is chosen instead. If we think of the CSSR algorithm in terms of our graph equivalency, the CSSR algorithm does not consider the existence of an edge between $00$ and $10$. Let $T$ be Table \ref{table:pvalues}. For a vertex $i$,  CSSR only places an edge between $i$ and $\argmax_{j\neq i} \{T(i,j): T(i,j)>0.05\}$.Thus, by the CSSR algorithm, the graph is really the image shown in Figure \ref{fig:graph2}. In this formulation, there is only one minimal state clustering (minimal clique covering), $G \;(or\;C)$ which does not satisfy the unifilar property, so $G$ is split into four states.

We now state and prove the main result of this paper.
Like the proof of Theorem \ref{prop:NPhardness}, the proof of the following theorem relies on reduction to the Minimum Clique Cover problem, but the reduction gadget is 
more complex to construct.
\begin{theorem} MSDpFSA is $\mathrm{NP}$-hard.
\end{theorem}
\begin{IEEEproof}
%
%
%
	We show that for any arbitrary graph $G$ there is an input sequence $\mathbf{y}$ so that the solution to the Minimum State Deterministic pFSA (MSDpFSA) problem generated from this string solves the minimum clique covering problem for $G$. It follows at once that MSDpFSA is $\mathrm{NP}$-hard.


	Let $G=(V,E)$ be an arbitrary graph with vertices $V$ and edges $E$. Since $G$ is arbitrary, it may be composed of several components. Two vertices $v_1,v_2 \in V$ are in the same \textit{component} of $G$ if there is a walk from $v_2$ to $v_1$.



	As before, we will map the vertices of $G$ to unique sub-strings of length $L$ taken from the input sequence $\mathbf{y}$.
	%
	%
	Let $n_0$ be the cardinality of the largest component in $G$, let $c(G)$ be the number of components and let $N = \max\{n_0,c(G)\}$. Let $L = 2$ and assume we have alphabet $\mathcal{A} = \{0,\dots,N\}$. There are $n_0 + 1$ strings of length $2$ with form $\{s0 : s \in \{0,\dots,n_0 - 1\}\}$. For example, the strings $00$,$10$, $20$ etc. all satisfy this pattern. To each vertex in the largest component of $G$ (with size $n_0$) assign one of these strings.

	Let $n_1$ be the size of the second largest component and repeat this process of string assignment but with the set of length $2$ strings $\{s1 : s \in \{0,\dots,n_1 - 1\}$. It is clear we can repeat this process until each vertex in $V$ has been assigned a length $2$ string.

	Since $L=2$ and our alphabet has $N+1$ symbols, there are $(N+1)^2$ unique strings. For each possible string that was \textit{not} already assigned to a vertex, add a vertex to $G$ corresponding to this string to create a new graph $G'$. Call these added vertices $V'$. Clearly $G$ is a subgraph of $G'$ and any minimum clique covering of $G$ produces a minimum clique covering of $G'$ since the additional isolated vertices are necessarily in their own clique.

	Notice any clique in $G'$ not made up of vertices in $V'$ must be a subset of a component in $G$. Let $\delta:V \cup V' \times \mathcal{A} \rightarrow V \cup V'$ be the induced transition function on the graph $G'$ when we associate each vertex with its corresponding string and let $c(v)$ denote the component in which $v$ resides. By construction, $\delta$ has the property that if $\sigma \in \mathcal{A}$ and $v_1,v_2 \in V$ and $c(v_1) = c(v_2)$, then $\delta(v_1,\sigma) = \delta(v_2,\sigma)$. For example every string (vertex) in the set $\{00,10,\cdots(n_0 - 1)0\}$ transitions to the string $0\sigma$ on symbol $\sigma$. Therefore by construction, every possible clique on $G'$ must transition to one and only one other clique. Thus, any minimal clique covering is guaranteed to generate a unifilar transition function. In particular, on $\sigma$, this component will map either to an extra string or to the next $\sigma^\text{th}$ component.


	It remains to show that we can construct a sequence polynomial time using the alphabet $\mathcal{A}$ so that the conditional probabilities of each string of length two ensure that we obtain the correct edges to produce the graph $G'$ and thus $G$. If so, then MSDpFSA reduces tot the minimum clique cover problem.

	Because the equivalence of two probability distributions in the CSSR algorithm is defined through a statistical test, the distributions only need to match within a predefined value $\epsilon$ in order to be deemed statistically equivalent. This value of $\epsilon$ will vary by statistical test, sample size, etc. but it exists.

	To construct the string, define a Markov chain so that each state corresponds to the vertex set of $V \cup V'$, where vertices are treated as their length 2-string assignments. Suppose $v$ corresponds to string $\alpha_1\alpha_2$, we enforce a transition structure that ensures only transitions of the form $\alpha_1\alpha_2 \mapsto \alpha_2\alpha_3$ may have non-zero probability $p_{3|12}$. The resulting substring produced by traversing these states is $\alpha_1\alpha_2\alpha_3$ and asymptotically will approach $p_{3|12}$ as the Markov chain is used to generate the string. By judiciously building the transition probabilities, we may ensure that in an appropriate statistical test, the string $\mathbf{y}$ that is generated by the Markov chain will produce conditional probabilities satisfying the following constraints:
	\begin{align*}
	&|p_{\sigma | v_1} - p_{\sigma | v_2}| \leq \epsilon \quad
	\forall v_1,v_2 \in E, \forall \sigma \in \mathcal{A}\\
	&\sum_{\sigma \in \mathcal{A}}|p_{\sigma | v_1} - p_{\sigma | v_2}| \geq |\mathcal{A}|\epsilon \quad
	\forall v_1,v_2 \not\in E\\
	&\sum_{\sigma \in \mathcal{A}} p_{\sigma | v} = 1 \quad
	\forall v \in V\\
	&p_{\sigma | v} \geq 0 \quad \forall v \in V, \sigma \in \mathcal{A}
	\end{align*}
	The second constraint ensures there is at least one symbol $\sigma \in \mathcal{A}$ so that $|p_{\sigma|v_1} - p_{\sigma|v_2}| \geq \epsilon$.

	For small enough $\epsilon$ (implying a potentially large sample length), the feasible region is non-empty and thus there is an assignment of probabilities satisfying these constraints. We now show that this assignment can be accomplished in polynomial time. Note first, we can re-write these constraints as:
	\begin{align*}
	&p_{\sigma | v_1} - p_{\sigma | v_2} \leq \epsilon \quad
	\forall v_1,v_2 \in E, \forall \sigma \in \mathcal{A}\\
	&p_{\sigma | v_2} - p_{\sigma | v_1} \leq \epsilon \quad
	\forall v_1,v_2 \in E, \forall \sigma \in \mathcal{A}\\
	&\sum_{\sigma \in \mathcal{A}}|p_{\sigma | v_1} - p_{\sigma | v_2}| \geq |\mathcal{A}|\epsilon \quad
	\forall v_1,v_2 \not\in E\\
	&\sum_{\sigma \in \mathcal{A}} p_{\sigma | v} = 1 \quad
	\forall v \in V\\
	&p_{\sigma | v} \geq 0 \quad \forall v \in V, \sigma \in \mathcal{A}
	\end{align*}
	eliminating one non-linear term and leaving only the (now) third (non-linear) constraint. For all $v_1,v_2 \not\in E$ and $\sigma \in \mathcal{A}$, define:
	\begin{equation}
	p_{\sigma | v_1} - p_{\sigma | v_2} = z_{\sigma|v_1v_2}^+ - z_{\sigma|v_1v_2}^-
	\label{eqn:trick}
	\end{equation}
	where $z_{\sigma|v_1v_2}^+, z_{\sigma|v_1v_2}^- \geq 0$. Consider the linear programming problem:
	\begin{equation}
	\left\{
	\begin{aligned}
	\min \;\; & \sum_{v_1,v_2 \not\in E, \sigma \in \mathcal{A}}\left( z_{\sigma|v_1v_2}^+ + z_{\sigma|v_1v_2}^-\right)\\
	s.t.\;\;& p_{\sigma | v_1} - p_{\sigma | v_2} \leq \epsilon \quad
	\forall v_1,v_2 \in E, \forall \sigma \in \mathcal{A}\\
	& p_{\sigma | v_2} - p_{\sigma | v_1} \leq \epsilon \quad
	\forall v_1,v_2 \in E, \forall \sigma \in \mathcal{A}\\
	&\sum_{\sigma \in \mathcal{A}}\left(z_{\sigma|v_1v_2}^+ + z_{\sigma|v_1v_2}^-\right) \geq |\mathcal{A}|\epsilon \quad
	\forall v_1,v_2 \not\in E\\
	&p_{\sigma | v_1} - p_{\sigma | v_2} = z_{\sigma|v_1v_2}^+ - z_{\sigma|v_1v_2}^-,\\
	&\hspace*{6em} \forall v_1,v_2 \not\in E, \forall \sigma \in \mathcal{A}\\
	&\sum_{\sigma \in \mathcal{A}} p_{\sigma | v} = 1 \quad
	\forall v \in V\\
	&p_{\sigma | v}, z_{\sigma|v_1v_2}^+, z_{\sigma|v_1v_2}^- \geq 0 \quad \forall v \in V, \sigma \in \mathcal{A}
	\end{aligned}
	\right.
	\label{eqn:LP}
	\end{equation}
	We know that Equation \ref{eqn:trick} must hold. Further, the objective function is minimized if and only if either $z_{\sigma|v_1v_2}^+ > 0$ or $ z_{\sigma|v_1v_2}^- > 0$, but not both. Therefore, $(z_{\sigma|v_1v_2}^+ + z_{\sigma|v_1v_2}^-) = |p_{\sigma | v_1} - p_{\sigma | v_2}|$.  Thus, the constraints of the \textit{linear programming problem} (Expression \ref{eqn:LP}) are equivalent to the constraints required by our probability assignment problem. Linear programming problems are solvable in polynomial time by \cite{Karmarker84}. Furthermore, the interior point methods that are known to be polynomial will return feasible points very close to extreme points, implying that we can find a $p_{\sigma|v} > 0$. (It is actually sufficient to execute the optimization only until the original constraints are satisfied.) After transition probability assignment, remove any states that do not have in-transitions, these are unreachable. The fact that $p_{\sigma|v} > 0$  ensures the irreducibility of the Markov chain. For any given state $xy$, there is non-zero probability of transitioning to $yz$, where $z$ is an arbitrary element of $\mathcal{A}$. Furthermore, $yz$ has a non-zero probability of transitioning to $zw$, where $w$ is also an arbitrary element of $\mathcal{A}$. Thus, the chain is irreducible because every state can be reaching by every other state. Strings produced by the resulting Markov chain will have statistical properties that yield a sub-string statistical equivalence graph like $G'$.
	Thus we have found a polynomial time reduction of MSDpFSA to the Minimum Clique Covering Problem. Therefore, the MSDpFSA is $\mathrm{NP}$-hard.
\end{IEEEproof}

\section{Minimal Clique Covering Reformulation of CSSR Algorithm}

In this section we present an algorithm for determining the minimal state hidden Markov model using a minimal clique vertex covering reformulation. As shown in the previous section, the two problem formulations are equivalent if we exclude the unifilar constraint. Let $\mathbf{W}=\{S_1,...,S_n\}$ be the set of unique strings of length $L$ in $\mathcal{A}$, so $W$ is the set of vertices in our graph formulation. In the revised algorithm, we first use the CSSR Algorithm, Algorithm \ref{alg:CSSR}, to find an upper bound on the number of cliques needed. We then use Bron-Kerbosch algorithm to enumerate all maximal cliques, $C=\{C_1,C_2,...,C_m\}$, given as Algorithm \ref{alg:Bron} \cite{Bron73}.

Define $H:=\argmin_R \{|R| :  R \subset \mathbf{C} \; \land \;  \cup_i R_i=W\}$. Thus $H$ is the set of maximal cliques of minimal cardinality such that every vertex belongs to at least one clique. This can be found using a simple binary integer programming problem given in Algorithm \ref{alg:clique}. We can think of Algorithm \ref{alg:clique} as defining a mapping from every string to the set of all maximal cliques. A clique is ``activated" if at least one string is mapped to it. The goal, then, is to activate as few cliques as possible. Of course, each string can only be mapped to a clique in which it appears. We define a few variables, which are present in our integer programming formulation: \\ \\
\begin{equation}
y_{i}= \begin{cases}
1 &C_i \in H\\
0 & \text{else} \end{cases}
\label{eq:y}
\end{equation}
The variable $y$ can be thought of as whether or not clique $C_i$ is activated.\\
\begin{equation}
I_{ij}= \begin{cases}
1 &S_i \in C_j \in H\\
0 & \text{else} \end{cases}
\label{eq:I}
\end{equation}
\begin{equation}
s_{ij}= \begin{cases}
1 &S_i \text{ is mapped to } C_j \in H\\
0 & \text{else} \end{cases}
\label{eq:s}
\end{equation}
To distinguish between $I$ and $s$, notice that each string is only  mapped to one clique, but each string could appear in more than one clique.

\begin{proposition}
The set $H$ is a minimal clique vertex covering.
\label{prop:covering}
\end{proposition}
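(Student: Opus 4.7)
The plan is to establish two inequalities between $|H|$ and the minimum clique cover number $\kappa(G)$ of the graph induced on $W$ by the $\mu$ relation, where by ``minimum clique covering'' I mean a collection of cliques (not required to be maximal) of smallest cardinality whose union is $W$. The direction $|H|\ge\kappa(G)$ is immediate from the definition: $H$ consists of maximal cliques, which are in particular cliques, and $H$ covers $W$ by construction, so $H$ is one candidate for a clique cover and its cardinality is bounded below by the optimum.

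For the reverse inequality $|H|\le\kappa(G)$, I would take any optimal clique cover $R^{*}=\{K_1,\ldots,K_{\kappa(G)}\}$, where the $K_i$ are cliques of $G$ but not necessarily maximal. The key observation is that every clique in $G$ is contained in at least one maximal clique, obtained by greedily adding vertices adjacent to all current members until no further extension is possible. Since Algorithm \ref{alg:Bron} enumerates every maximal clique, for each $K_i$ I can select some $\tilde K_i\in\mathbf{C}$ with $K_i\subseteq\tilde K_i$. Then the family $\tilde R^{*}=\{\tilde K_1,\ldots,\tilde K_{\kappa(G)}\}\subseteq\mathbf{C}$ still covers $W$ because $\bigcup_i \tilde K_i\supseteq\bigcup_i K_i=W$. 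Removing duplicates only decreases the cardinality, so $\tilde R^{*}$ is a feasible point for the $\argmin$ defining $H$ with $|\tilde R^{*}|\le\kappa(G)$, which forces $|H|\le\kappa(G)$. Combining the two inequalities gives $|H|=\kappa(G)$, and since $H$ is itself a clique cover of $W$, it is a minimum clique vertex covering.

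There is no real technical obstacle here; the proposition is essentially a restatement of the folklore fact that when minimizing the number of cliques needed to cover the vertex set, we lose nothing by restricting attention to maximal cliques. The only points requiring a bit of care are (i) verifying that ``minimal'' in the statement is interpreted as ``minimum cardinality'' (consistent with the $\argmin$ definition of $H$ and the formulation in Algorithm \ref{alg:clique}), and (ii) noting that the extension step $K_i\mapsto\tilde K_i$ need not produce distinct maximal cliques, but this only helps the bound. The integer program of Algorithm \ref{alg:clique} then correctly computes $H$, and by the proposition it also solves the underlying MSNDpFSA instance via the equivalence established in Proposition \ref{prop:NPhardness}.
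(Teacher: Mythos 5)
Your proof is correct and takes essentially the same route as the paper's: both arguments extend each clique of an optimal (not necessarily maximal) cover to a maximal clique, obtaining a feasible point for the $\argmin$ defining $H$ whose cardinality equals the clique cover number. Your version is marginally more careful in making the two inequalities explicit and in noting that the extension step may merge cliques (the paper asserts the count is unchanged), but that observation only strengthens the bound and does not change the argument.
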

\begin{IEEEproof}
Suppose there is a minimal clique vertex covering of a graph $G$ with $k$ cliques such that every clique is not maximal. Choose a non-maximal clique and expand the clique until it is maximal. Continue this procedure until every clique is maximal. Let our set of cliques be called $R$. Clearly $|R|=k$ since no new cliques were created. Also note that $R \subset C$ since $R$ consists of maximal cliques. Further, $\cup_i R_i=W$ since we started with a clique vertex covering. Finally, notice that $R=\argmin_H \{|H| :  H \subset \mathbf{C} \; \land \;  \cup_i H_i=W\}$ because $|R|=k$ and $k$ is the clique covering number of $G$, so it is impossible to find an $H$ with cardinality less than $k$.
\end{IEEEproof}

\begin{algorithm}[htbp]
\footnotesize
\caption{--Finding all maximal cliques}
\textbf{Input:} Observed sequence $\mathbf{y}$; Alphabet $\mathcal{A}$, Integer $L$\\
\textbf{Setup:}
\begin{algorithmic}[1]
\STATE Call the set of strings that appear in $\mathbf{y}$ $\mathbf{W}=\{S_1,...,S_n\}$, so $\mathbf{W}$ is our set of vertices
\STATE Determine $f_{\mathbf{x}|\mathbf{y}}(a)$ for each unique string
\STATE Generate an incident matrix, $\mathbf{U} \in \mathbb{R}^{nxn}$, $\mathbf{U}_{ij}=u_{ij}$, where $u_{ij}$ is defined in \ref{eq:u}
\STATE $P \leftarrow [1,2,...,n]$ \COMMENT{$P$ holds the prospective vertices connected to all vertices in $R$}
\STATE $R \leftarrow \emptyset$ \COMMENT{Holds the currently growing clique}
\STATE $X \leftarrow \emptyset$ \COMMENT{Holds the vertices that have already been processed}
\end{algorithmic}
\textbf{function} Bron-Kerbosch(R,P,X)
\begin{algorithmic}[1]
	\STATE $k \leftarrow 0$
	\IF{$P=\emptyset$  and $X=\emptyset$}
		\STATE report $R$ as a maximal clique, $R=C_k$
		\STATE $k \leftarrow k+1$
	\ENDIF
	\FOR{each vertex $v$ in $P$}
		\STATE Bron-Kerbosch($R \cup v, P \cap N(v), X \cap N(v)$) \COMMENT{where $N(v)$ are the neighbors of $v$}
		\STATE $P \leftarrow P \setminus v$
		\STATE $X \leftarrow X \cup v$
	\ENDFOR
	\STATE \textbf{return} $\{C_1,C_2,...,C_m\}$
\end{algorithmic}
\label{alg:Bron}
\end{algorithm}

\begin{algorithm}[htbp]
\footnotesize
\caption{--Minimal Clique Vertex Covering}
\textbf{Input:} Observed sequence $\mathbf{y}$; Alphabet $\mathcal{A}$, Integer $L$\\
\textbf{Set-up:}\\
\underline{1. CSSR}
\begin{algorithmic}
\STATE run the CSSR Algorithm (Algorithm \ref{alg:CSSR})
\STATE \textbf{return} $k$, the number of cliques found by CSSR
\end{algorithmic}
\underline{2. Find all maximal cliques}
\begin{algorithmic}
\STATE run Algorithm \ref{alg:Bron}
\STATE \textbf{return} $\{C_1,C_2,...,C_m\}$
\end{algorithmic}
\textbf{Finding all minimal clique coverings:}
\begin{algorithmic}
\STATE Let $I_{ij}=1$ denote string $S_i$ belonging to clique $C_j$
\STATE Let $s_{ij}=1$ denote string $S_i$ being mapped to clique $C_j$
\end{algorithmic}
$$Q(C)=\left\{
\begin{aligned}
\min\;\; &\sum_{j\in \{1,...,m\}} y_j \\
&\hspace*{4em}\{\text{$y_j$ indicates whether clique $C_j$ is being used or not}\}\\
s.t.\;\; &y_j \geq \frac{\sum_i s_{ij}}{|C_j|} \;\; \forall i\in \{1,...,n\}, j \in \{1,...,m\} \\
&\sum_{j} y_j \leq k \;\; \forall j \in \{i,..,m\} \\
& s_{ij} \leq I_{ij} \;\; \forall i \in \{1,...,n\}, j \in \{1,...,m\}\\
& \sum_j s_{ij}=1 \;\; \forall i \in \{1,...,n\}, j \in \{1,...,m\}
\end{aligned}\right.$$
\label{alg:clique}
\end{algorithm}

\begin{proposition}
Any solution to Q(C) results in a minimal clique vertex covering.
\end{proposition}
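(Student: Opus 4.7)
The plan is to verify that $Q(C)$ exactly encodes the problem of selecting a minimum-cardinality subcollection of the maximal cliques $C=\{C_1,\ldots,C_m\}$ that covers every vertex of $G$, and then appeal to Proposition \ref{prop:covering} to identify any optimal solution with a minimum clique vertex covering.

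First I would show that every feasible setting of the $s$ variables induces a clique cover of $G$ by maximal cliques. The constraints $\sum_j s_{ij}=1$ and $s_{ij}\leq I_{ij}$ together guarantee that each vertex $S_i$ is mapped to exactly one clique that actually contains it, so the collection of cliques receiving at least one string collectively covers all of $W$. In particular, the induced cover is by elements of $C$, which is the required form.

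Next I would check that the two inequalities governing $y$ make $y_i$ the correct $0/1$ indicator of whether clique $C_i$ is used. The lower bound $y_i \geq \frac{\sum_j s_{ij}}{|C_j|}$, combined with $y_i \in \{0,1\}$, forces $y_i=1$ as soon as some string is mapped to $C_i$; the upper bound $y_i \leq \sum_j s_{ij}$ forces $y_i=0$ when no string is mapped there. Consequently $\sum_i y_i$ equals the cardinality of the cover produced by the $s$ variables, so minimizing the objective $\sum_i y_i$ minimizes the number of activated maximal cliques. The auxiliary constraint $\sum_i y_i < k$ is merely a pruning cut supplied by the CSSR upper bound from Stage~1; since the CSSR run already yields a feasible grouping, this cut does not exclude the optimum.

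Taking $H$ to be the set of cliques with $y_i=1$ in an optimal solution, $H \subseteq C$ consists of maximal cliques, satisfies $\bigcup_i H_i = W$, and has minimum cardinality among all such subsets. Proposition \ref{prop:covering} then states precisely that this $H$ is a minimum clique vertex covering of $G$, which completes the argument. The proof is essentially a verification that the IP encoding matches the combinatorial definition; the one conceptually nontrivial step, namely that restricting attention to maximal cliques does not lose optimality, was already dispatched in Proposition \ref{prop:covering}, so I do not anticipate a real obstacle here beyond careful bookkeeping on the constraints.
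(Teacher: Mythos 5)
Your proposal is correct and follows essentially the same route as the paper's own proof, which simply asserts that an optimizer of $Q(C)$ yields the set $H=\argmin_R\{|R| : R\subset \mathbf{C} \wedge \cup_i R_i = W\}$ and invokes Proposition \ref{prop:covering}; you merely supply the constraint-by-constraint verification that the paper leaves implicit. One small caveat neither you nor the paper addresses: the strict cut $\sum_i y_i < k$ would make the program infeasible whenever CSSR already returns the optimal value $k$, so it must be read as $\sum_i y_i \leq k$ for the argument to go through, but this does not change the substance of the proof.
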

\begin{IEEEproof}
This is clear by noting that $\argmin Q(C)$ results in the subset, $H$, of $C=\{C_1,C_2,...,C_m\}$ where $H$ is defined as $\argmin_R \{|R| :  R \subset \mathbf{C} \; and \;  \cup_i R_i=W\}$ in conjunction with Proposition \ref{prop:covering}.
\end{IEEEproof}

Before discussing the rest of the algorithm, which concerns the unifilar property, we first state an important remark about Algorithm \ref{alg:clique}. It is possible that the set $H$ is such that some vertices are in more than one maximal clique. However, we are actually interested in the set of minimal clique vertex coverings for which each vertex only belongs to \textit{one} clique, which can be extracted from $H$. See Figures \ref{fig:minclique} and \ref{fig:cover} for an example. The set $H$ is shown in Figure \ref{fig:minclique}. From this set $H$, we can deduce the two minimal clique vertex coverings shown in \ref{fig:cover}. In initializing Algorithm \ref{alg:clique reconstruction}, which imposes the unifilar property on each covering, we find the set of all minimal coverings for which each vertex belongs to one clique.

Once we have determined the set of minimal clique covers of our original graph where each vertex only belongs to one clique, we select a final clique covering which is minimal and unifilar. This is done by considering each minimal clique covering individually and then restructuring it when necessary to be unifilar. This corresponds to the reconstruction part of Algorithm \ref{alg:CSSR}. Note in Algorithm \ref{alg:clique reconstruction} each vertex corresponds to a string of length $L$, which came from an initial string $\mathbf{y}$. Also recall that we have previously defined $\mathbf{z}$ as

\[ z^\sigma_{il}= \begin{cases}
1 &\text{string $i$ maps to string $j$ upon symbol $\sigma$} \\
0 & \text{else} \end{cases}
\]

where $\sigma$ is any element of our alphabet $\mathcal{A}$. The following proposition is clear from the construction of Algorithms \ref{alg:clique} and \ref{alg:clique reconstruction}.

\begin{proposition}
Algorithm \ref{alg:clique} along with Algorithm \ref{alg:clique reconstruction} produces an encoding of a minimum state probabilistic finite machine capable of generating the input sequence.
\end{proposition}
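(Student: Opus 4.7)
The plan is to assemble the proposition from the preceding structural results, viewing the two algorithms as successively enforcing the two feasibility requirements of Problem~\ref{eqn:Problem}. First I would invoke Proposition~\ref{prop:covering} together with the immediately preceding proposition on $Q(C)$: these establish that Algorithm~\ref{alg:clique} returns a collection $H$ of maximal cliques whose cardinality equals the clique cover number of the similarity graph $G$. By the reduction in Proposition~\ref{prop:NPhardness}, this number is exactly the optimum of MSNDpFSA, so the output of Algorithm~\ref{alg:clique} is already optimal with respect to the statistical-indistinguishability constraint alone and encodes a valid (possibly non-deterministic) probabilistic FSA that generates the input sequence.

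Next I would turn to Algorithm~\ref{alg:clique reconstruction}. Given a minimum clique cover in which each string is assigned to a unique clique, this phase splits a cluster only when two of its strings map to successors lying in different clusters under some $\sigma \in \mathcal{A}$; such splits are forced by the definition of $\delta$ and are therefore necessary in every deterministic encoding. Since splitting only refines the existing partition, each resulting state still contains strings with statistically indistinguishable conditional distributions, so the output remains feasible for the full MSDpFSA problem.

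For minimality, I would use the remark preceding the proposition: Algorithm~\ref{alg:clique reconstruction} is initialized with \emph{every} minimum clique cover extractable from $H$ in which each vertex belongs to a single clique, and it performs only the forced deterministic splits on each before retaining the best. The key lemma needed to upgrade this to a global optimum is that every optimum of MSDpFSA refines some minimum clique cover of $G$; equivalently, any minimum-state deterministic partition can be obtained by starting from some minimum non-deterministic partition and splitting. I would prove this by greedily merging classes of a hypothetical deterministic optimum whose union forms a clique in $G$, then extending each resulting clique to a maximal one as in the proof of Proposition~\ref{prop:covering}; the resulting maximal clique cover has cardinality no larger than the clique cover number and hence is minimum by Proposition~\ref{prop:NPhardness}, placing it in the family enumerated from $H$. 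The main obstacle is this last merging argument: one has to verify that greedy merges can always be performed without destroying condition~(1), which I would handle by induction on the number of merges using the defining property of $\mu$, namely that two strings sharing a state must have indistinguishable conditional distributions.
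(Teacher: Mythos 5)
Your proposal goes far beyond the paper, which offers no proof at all for this proposition (it is simply declared ``clear from the construction'' of Algorithms \ref{alg:clique} and \ref{alg:clique reconstruction}). You have correctly isolated the step any honest proof must supply: the lemma that every optimum of Problem \ref{eqn:Problem} refines some minimum clique cover that the algorithm actually enumerates. Unfortunately your proposed proof of that lemma does not work, and the lemma itself is doubtful. Greedily merging classes of a hypothetical deterministic optimum until no two classes have a union forming a clique yields a clique partition that is \emph{maximal under merging}, not one of \emph{minimum cardinality}; these are different notions. Concretely, let $G$ be the path on vertices $1,2,3,4$ with edges $\{1,2\},\{2,3\},\{3,4\}$. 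The partition $\{1\},\{2,3\},\{4\}$ admits no further merges, yet has three parts while the clique cover number is two (via $\{1,2\},\{3,4\}$), and it does not refine that unique minimum cover. If the transition structure happened to make $\{1\},\{2,3\},\{4\}$ deterministic while every deterministic refinement of $\{1,2\},\{3,4\}$ required four states, the algorithm would return four states against a true optimum of three. So the deterministic optimum need not refine any minimum clique cover, and no merging argument of the kind you sketch can close this.

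There is a second, independent gap: even where your lemma holds, Algorithm \ref{alg:clique} returns a single optimal solution $H$ of $Q(C)$, and Algorithm \ref{alg:clique reconstruction} only enumerates the disjoint covers extractable from that one $H$; the minimum cover refined by the deterministic optimum may be assembled from different maximal cliques and hence lie outside the family $\mathbf{T}$. Your first two paragraphs are sound --- feasibility of the final output for Problem \ref{eqn:Problem} follows because splitting only refines cliques, and Proposition \ref{prop:covering} together with the reduction in Proposition \ref{prop:NPhardness} gives optimality for the non-deterministic problem \ref{eqn:MSNDpFSA} --- but the passage from non-deterministic to deterministic optimality is exactly where both your argument and the paper's unsupported assertion break down. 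A defensible version of the proposition would either weaken the conclusion to ``a deterministic machine with at most as many states as the best deterministic refinement of some enumerated minimum clique cover,'' or would have to search over all clique partitions, which reintroduces the full difficulty of Problem \ref{eqn:Problem}.
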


\begin{figure}[htbp]
\centering
\includegraphics[scale=0.75]{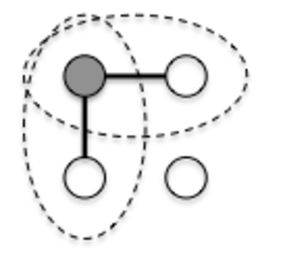}
\caption{An example output from Algorithm \ref{alg:clique}
}
\label{fig:minclique}
\end{figure}

\begin{figure}[htbp]
\centering
\begin{subfigure}
	\centering
	\includegraphics[scale=.75]{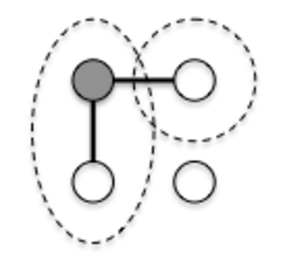}
	\label{fig:cover1}
\end{subfigure}
\begin{subfigure}
	\centering
	\includegraphics[scale=.75]{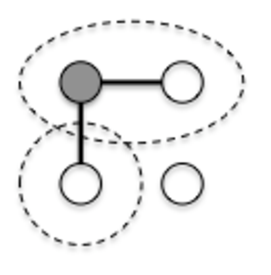}
	\label{fig:cover2}
\end{subfigure}
\caption{Minimal clique coverings where each vertex is only in one clique}
\label{fig:cover}
\end{figure}

\begin{algorithm}[htbp]
\footnotesize
\caption{-- Minimal Clique Covering Reconstruction}
\textbf{Input:} Observed sequence $\mathbf{y}$; Alphabet $\mathcal{A}$, Integer $L$; \\
\textbf{Set-up:}
\begin{algorithmic}
\STATE Perform Algorithm \ref{alg:Bron} on $\mathbf{y}$ and obtain a minimal clique covering
\STATE From this initial clique covering, find the set of all minimal coverings such that each vertex belongs to exactly one clique, an example of which is seen in Figure \ref{fig:cover}. Let this set of minimal coverings be $\textbf{T}=\{T_1,...,T_l\}$
\STATE From $\mathbf{y}$, $\mathcal{A}$, and the set $\{S_1,...,S_n\}$ as found in Algorithm \ref{alg:Bron}, determine the matrix $z$.
\end{algorithmic}
\textbf{Minimal Clique Covering Reconstruction:}
\begin{algorithmic}[1]
\FOR{$h=1$ \TO $l$} 
	\STATE $i=1$
	\STATE $N=$ the number of cliques in each $T_h$ \COMMENT note that all $T_h$ have the same number of cliques because they are all minimal
	\STATE $M=N$
	 \WHILE{$i \leq M$}
		\STATE $N=M$
		\STATE $F=$ the set of all vertices (strings) which are in clique $i$
		\IF{length$(F)>1$}
			\FOR{$j=2,3,...,\#\{F\}$}
			\STATE Use $\textbf{z}$ to find what clique $F(j)$ maps to for each $\sigma \in \mathcal{A}$
			\IF{$F(j)$ does not map to the same clique as $F(1)$ upon each $\sigma$}
				\IF{$M>N$ and $F(j)$ maps to the same clique as some $F(k)$ upon each $\sigma$, $k \in [N+1,M]$}
					\STATE Add $F(i)$ to the clique containing $F(k)$
				\ELSE
					\STATE Create a new clique containing $F(i)$
					\STATE $M=M+1$
				\ENDIF
			\ENDIF
			\ENDFOR
		\ENDIF
	\STATE $i=i+1$
	\ENDWHILE
\ENDFOR
\STATE $\text{FinalCovering}=\min\{C_h | C_h\text{ has the minimum number of columns } \forall h\}$
\end{algorithmic}
\textbf{return} FinalCovering
\label{alg:clique reconstruction}
\end{algorithm}

\section{Comparing run times of modified CSSR algorithms}

This paper has discussed three different approaches for determining minimal state hidden Markov models from a given input sequence: the CSSR algorithm, CSSR as an integer programming problem, and the minimal clique covering reformulation. We now compare the run times of all three algorithms. We find that the CSSR algorithm has the fastest run time, however it does not always produce a minimal state model as we have seen in a prior example. The minimal clique covering reformulation is relatively fast and always results in a minimal state model. The integer programming formulation is extremely slow, but also always results in a minimal state model. This makes CSSR a useful heuristic for solving the $\mathrm{NP}$-hard MSDpFSA problem.

All computational experiments in this section were performed using MATLAB, and the binary integer programming problems were solved using MATLAB's \textit{bintprog} function. The formulations were each tested using randomly generated strings in order to observe their behavior on a diverse set of inputs. The CSSR integer programming formulation, using only a two-character alphabet and \chg{a randomly generated} sequence of length ten as input, takes about 100 seconds to run. We can see in Figure \ref{fig:cliquetimes} that the minimal clique covering reformulations takes less than 0.2 seconds for \chg{randomly generated two-character} sequences up to length 10,000. For only two character alphabets, the run time appears to be a nearly linear function of the sequence length, with spikes occurring presumably in cases where more reconstruction was needed. In Figure \ref{fig:cliquetimes} we see the run time of the minimal clique covering reformulation for a \chg{randomly generated string with a three-character alphabet}. For sequences up to length 100, the algorithm took no more than 120 seconds, which is remarkably better than the integer programming problem.

\begin{figure}[htbp]
\centering
\begin{subfigure}
	\centering
	\includegraphics[scale=.4]{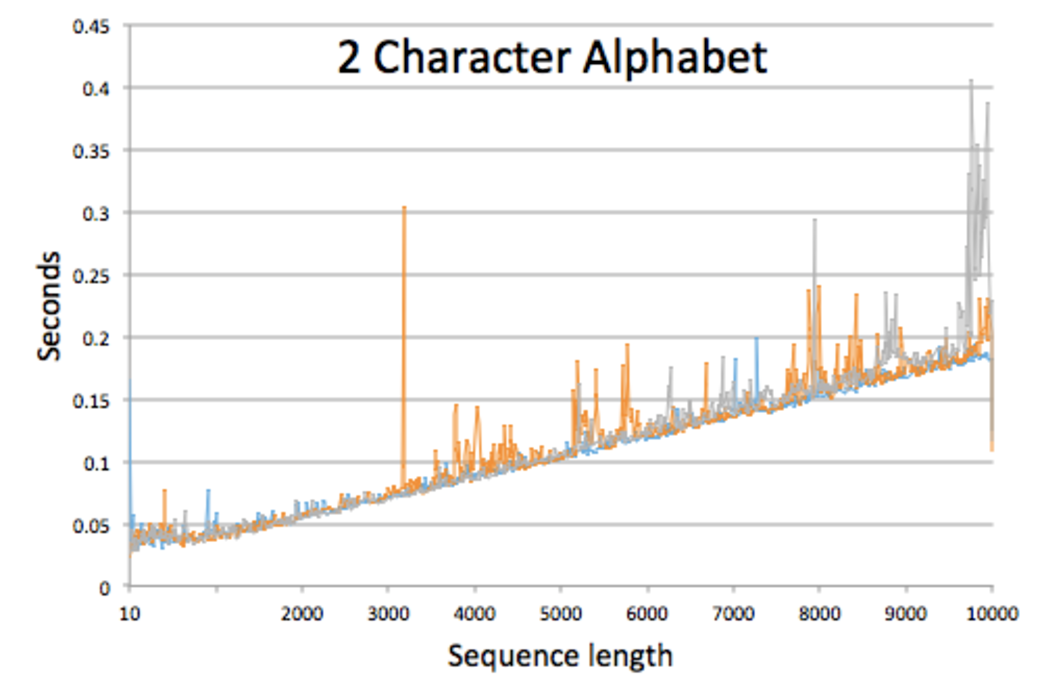}
\end{subfigure}
\begin{subfigure}
	\centering
	\includegraphics[scale=.36]{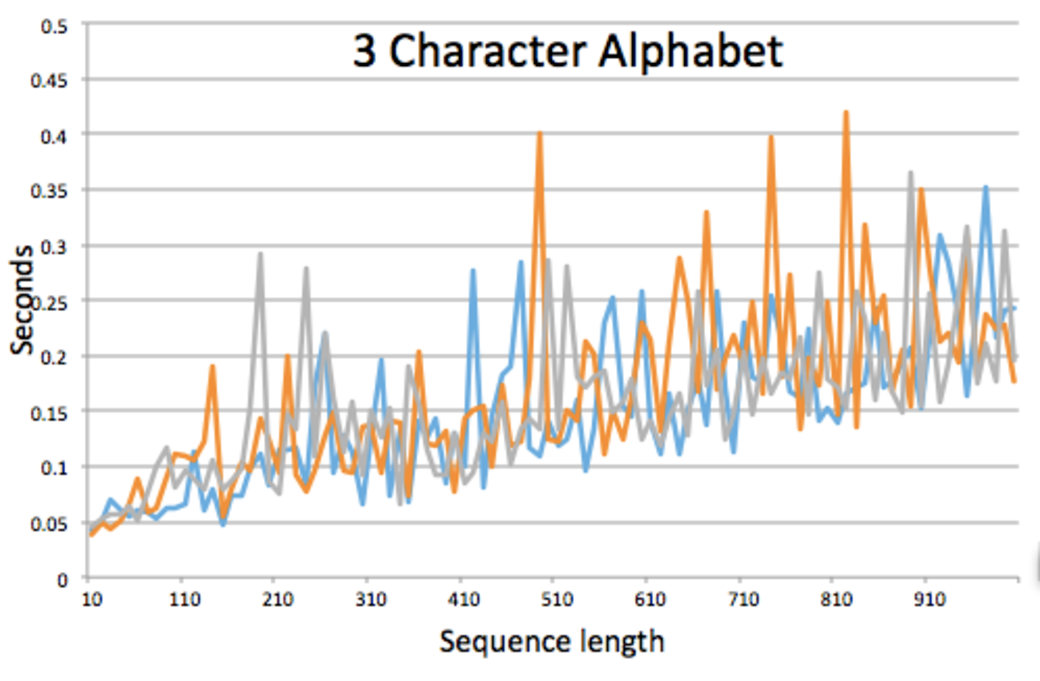}
\end{subfigure}
\caption{Clique covering reformulation run times. \chg{Each graph shows three different trial runs of randomly generated strings.}}
\label{fig:cliquetimes}
\end{figure}

For two-, three-, and 4-character alphabets, we also compare the minimal clique covering formulation time to that of the original CSSR algorithm. This can be seen in Figure \ref{fig:charactercompare} where we take the average of the trial run times for the CSSR and clique covering formulation to compare the two. Note that with a two-character alphabet, the clique covering formulation is slightly faster, but for the three- and four-character alphabets the CSSR algorithm is significantly faster. This is due to the fact that the CSSR algorithm does not actually guarantee a minimal state Markov model.
\begin{figure}[h]
\centering
\begin{subfigure}
	\centering
	\includegraphics[scale=.38]{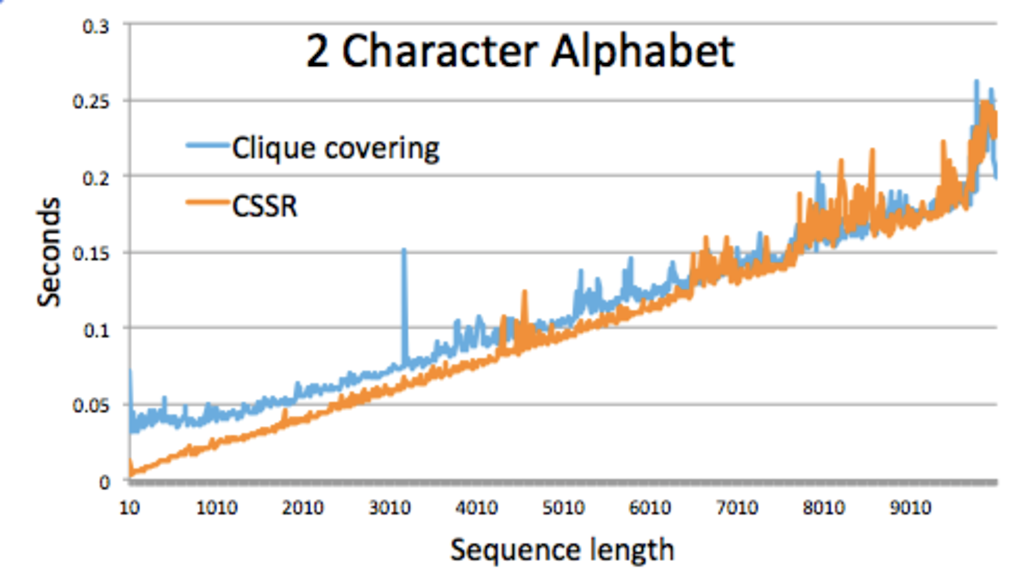}
\end{subfigure}
\begin{subfigure}
	\centering
	\includegraphics[scale=.38]{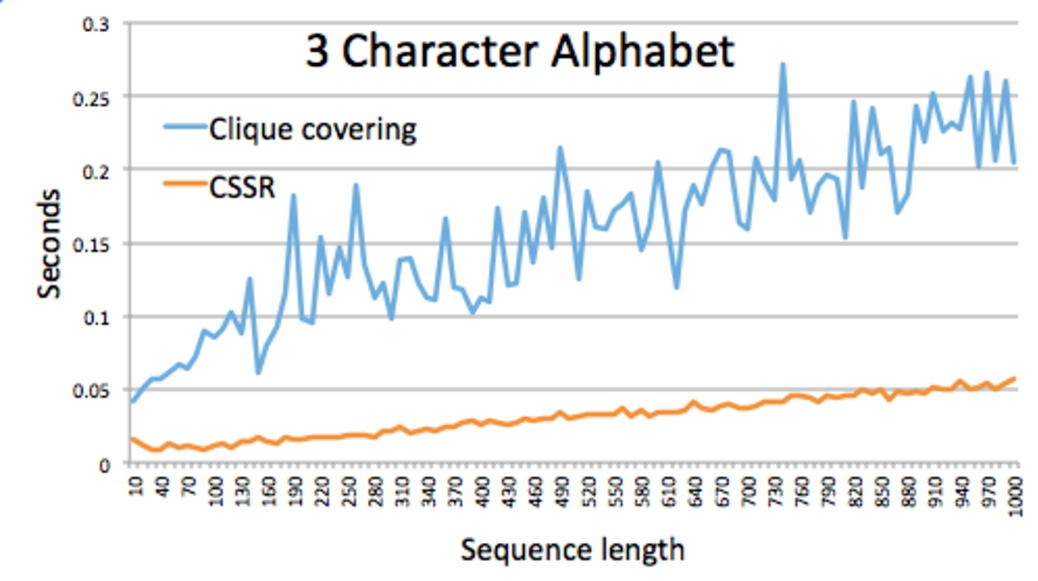}
\end{subfigure}
\begin{subfigure}
	\centering
	\includegraphics[scale=.38]{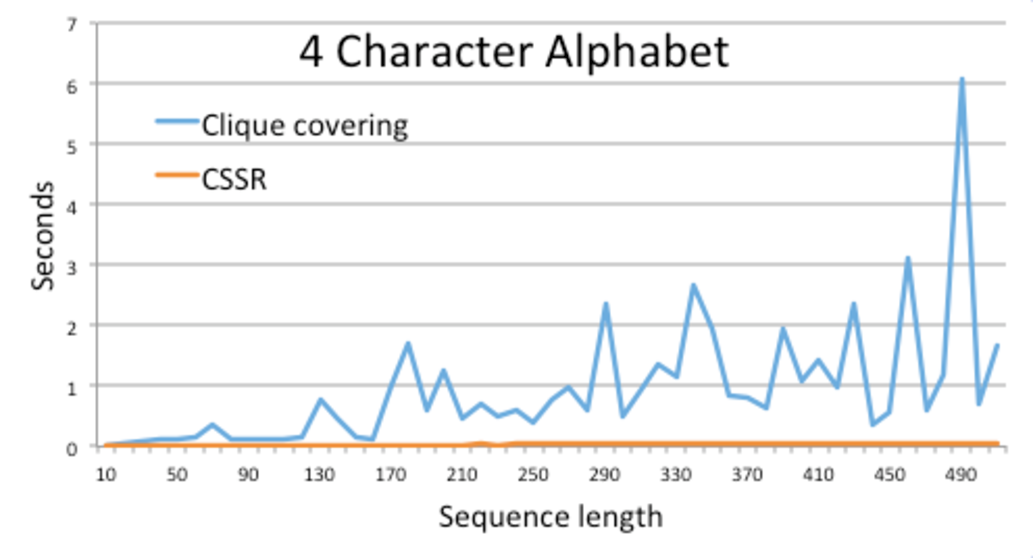}
\end{subfigure}
\caption{Minimal clique covering run times compared to CSSR}
\label{fig:charactercompare}
\end{figure}

\section{Conclusion and Future Directions}
In this paper we illustrated the problem stated by Shalizi and Crutchfield \cite{Shal02} of determining a minimal state probabilistic finite state representation of a data set is $\mathrm{NP}$-hard for finite data sets. As a by-product, we formally proved this to be the case for the non-unifilar case as studied in \cite{Schm06}. As such, this shows that both the CSSR algorithm of \cite{Shal02} and CSSA algorithm of \cite{Schm06} can be thought of as low-order polynomial approximations of $\mathrm{NP}$-hard problems.

Future work in this area includes studying, in detail, these approximation algorithms for the problems to determine what their approximation properties are. \ecp{A specific area of interest is placing bounds on the number of states given by a solution to Equation \ref{eqn:Problem} based on properties of the input string. We plan to investigate other linear approximation algorithms like those found in \cite{Carrasco,Higuera}.}

In addition to this work, determining weaknesses in this approach to modeling behavior are planned. We are particularly interested in the affects deception can have on models learned in this way. For example, we are interested in formulating a problem of constrained optimal deception in which a learner, using an algorithm like the one described here or the CSSR or CSSA approximations, is optimally confused by an input stream that is subject to certain constraints in its incorrectness.


\section*{Acknowledgements}
Portions of Dr. Griffin's and Ms. Paulson's work were supported by the Army Research Office under Grant W911NF-11-1-0487.

\appendix
\section{Computing $f_{q_i|\mathbf{y}}$ and
$f_{\mathbf{x}|\mathbf{y}}$}

The following formulas can be used to compute $f_{q_i|\mathbf{y}}$ and
$f_{a\mathbf{x}|\mathbf{y}}$ in Algorithm \ref{alg:CSSR}. Let
$\#(\mathbf{x},\mathbf{y})$ be the number of times the sequence
$\mathbf{x}$ is observed as a subsequence of $\mathbf{y}$.

\begin{equation}
f_{\mathbf{x}|\mathbf{y}}(a) =
\Pr(a | \mathbf{x}, \mathbf{y}) = \frac{\#(\mathbf{x}a,\mathbf{y})}
                                        {\#(\mathbf{x},\mathbf{y})}
\label{eqn:StringProb}
\end{equation}

\begin{equation}
f_{q_i|\mathbf{y}}(a) =
\Pr(a | q_i, \mathbf{y}) =
    \frac{\sum_{\mathbf{x} \in q_i}
            \#(\mathbf{x}a,\mathbf{y})}
            {\sum_{\mathbf{x} \in q_i}\#(\mathbf{x},\mathbf{y})}
\label{eqn:StateProb}
\end{equation}


\bibliographystyle{IEEEtran}
\bibliography{IEEEfull,Refs}

\end{document}